\documentclass[conference]{IEEEtran}
\IEEEoverridecommandlockouts

\usepackage{cite}
\usepackage{amsmath,amssymb,amsfonts}
\usepackage{algorithmic}
\usepackage{graphicx}
\usepackage{textcomp}
\usepackage{xcolor}

\usepackage{amsthm}
\usepackage{verbatim}
\usepackage{tikz}
\usetikzlibrary{arrows,shapes.misc,chains,scopes}
\usetikzlibrary{calc}
\usepackage{pgfplots}
\usepackage[dvipsnames]{xcolor}
\usepackage{acronym}
\usepackage[acronym]{glossaries} \makeglossaries 

\usepackage{resources/fst}

\def\BibTeX{{\rm B\kern-.05em{\sc i\kern-.025em b}\kern-.08em
    T\kern-.1667em\lower.7ex\hbox{E}\kern-.125emX}}

\newacronym{awgn}{AWGN}{additive white Gaussian noise}
\newacronym{ask}{ASK}{amplitude-shift keying}
\newacronym{brgc}{BRGC}{binary reflected Gray code}
\newacronym{ccdm}{CCDM}{constant composition distribution matcher}
\newacronym{dm}{DM}{distribution matcher} 
\newacronym{ess}{ESS}{enumerative sphere shaping}
\newacronym{fec}{FEC}{forward error correction}
\newacronym{fer}{FER}{frame error rate}
\newacronym{ftf}{f2f}{fixed-to-fixed}
\newacronym{isi}{ISI}{inter-symbol interference}
\newacronym{nbc}{NBC}{natural binary code}
\newacronym{ofdm}{OFDM}{orthogonal frequency-division multiplexing}
\newacronym{pas}{PAS}{probabilistic amplitude shaping}
\newacronym{pdm}{PDM}{product distribution matcher}
\newacronym{pmf}{pmf}{probability mass function}
\newacronym{rv}{RV}{random variable}

\begin{document}
\title{Divergence-Minimizing Distribution Matching for Parallel Channels and Channels with Memory
}

\author{\IEEEauthorblockN{Francesca Diedolo and Gerhard Kramer}
\IEEEauthorblockA{\emph{Institute for Communications Engineering} \\
\emph{School of Computation, Information and Technology}\\
\emph{Technical University of Munich, Munich, Germany} \\
francesca.diedolo@tum.de, gerhard.kramer@tum.de}
}

\maketitle

\begin{abstract}
Theory for distribution matchers (DMs) is extended to distributions with memory. The divergence-minimizing DM is shown to select the sequences with the highest target probability, as in the memoryless case. A scaling law for divergence is extended to distributions driven by innovation processes. The theory is applied to parallel additive white Gaussian noise channels. A modified enumerative sphere-shaping (ESS) method with a weighted energy constraint is used in implementations. An illustrative example with three channels shows that joint ESS across channels reduces the rate loss by a large factor compared to product DMs at short blocklengths. The gains are confirmed by simulations with probabilistic amplitude shaping and a 5G-NR low-density parity-check code.
\end{abstract}

\begin{IEEEkeywords}
Distribution matching, memory, sphere shaping.
\end{IEEEkeywords}

\section{Introduction}
\IEEEPARstart{A}{} \gls{dm} is an invertible mapping from source symbols to channel symbols, with the goal of making the channel symbols have a desired distribution \cite{bocherer2015bandwidth}. We consider fixed-to-fixed length \glspl{dm}. 

\Glspl{dm} can be classified as direct or indirect \cite{calderbank2002nonequiprobable}.
``Direct'' refers to starting with a target distribution and designing an algorithm to approach it. A prominent example is a \gls{ccdm} \cite{schulte2015constant}, where the \gls{dm} output sequences have the same empirical distribution, or type. Other examples are multiset-partition distribution matchers \cite{fehenberger2018multiset} and \glspl{pdm} \cite{bocherer2017high,steiner2018approaching}.
``Indirect'' refers to starting with a target rate and requiring the codewords to lie in a sphere with a specified energy. This is referred to as sphere-shaping. Examples of indirect algorithms are shell mapping \cite{laroia1994optimal} and \gls{ess} \cite{willems1993pragmatic}.

Most \gls{dm}s are designed for independent and identically distributed (iid) symbols, which is reasonable for memoryless channels.
However, for channels with memory, such as \gls{isi} channels, input distributions should have memory. Similarly, for parallel channels with different noise levels, the target distributions can be independent, but are not identically distributed across channels.

This paper extends results for \gls{dm}s from iid to general distributions, and designs schemes for parallel \gls{awgn} channels. The paper is organized as follows. Sec.~\ref{sec:preliminaries} introduces notation and distribution matching. Sec.~\ref{sec:properties} extends iid theory to distributions with memory. Sec.~\ref{sec:sphere-shaping} reviews sphere shaping and the \gls{ess} algorithm. Sec.~\ref{sec:parallel-channels} applies the theory to parallel \gls{awgn} channels, and develops a joint \gls{ess} algorithm for such channels. Sec.~\ref{sec:simulations} provides simulation results. Sec.~\ref{sec:conclusions} concludes the paper.

\section{Preliminaries}
\label{sec:preliminaries}
\subsection{Notation and Information Measures}
Sets are written with calligraphic letters $\cA$ and their cardinality as $|\cA|$. 
$1(.)$ is the indicator function that takes on the value 1 if its argument is true, and is 0 otherwise.

\Glspl{rv} are written with uppercase letters such as $A$, and their realizations with the respective lowercase letters. \Glspl{pmf} of $A$ are denoted by $P_A$ or $Q_A$, where $P_A$ and $Q_A$ usually refer to the actual and target distribution, respectively. We often omit subscripts on distributions when their arguments specify the \glspl{rv}. A sequence of \glspl{rv} is written as $A^n = A_1 ,\dots,A_n$.

The self-information of $a$ with respect to $P$ is
\begin{equation}
    i_P (a) = -\log_2P(a).
\end{equation}
The expectation of $A$ with respect to $P$ is 
\begin{equation}
    \bbE_P[A] = \sum\nolimits_{a\in \text{supp}(P)} P(a) \, a
\end{equation}
where $\text{supp}(P)$ is the support of $P$, i.e. the set of $a\in\cA$ with nonzero probability.  The divergence and cross entropy of $P$ and $Q$ are the respective
\begin{equation}
    \bbD(P||Q) = \bbE_P\left[\log_2 \frac{P}{Q}\right], \quad
    \bbX (P|| Q) = \bbE_P\left[\log_2 \frac{1}{Q}\right].
\end{equation}

\subsection{Distibution Matching}
A \gls{dm} is an invertible mapping from $k$ bits to $n$ symbols that approximates a target distribution $Q_{A^n}$. We write
\begin{equation}
    a^n = f_{\text{DM}}(b^k) \in \cA^n
\end{equation}
where $\cA$ is the output alphabet. The mapping efficiency is measured by the rate $R = k/n$. The output distribution is written as $P_{{A}^n}$, and the matching quality is measured by the divergence $\bbD(P_{A^n}||Q_{A^n})$. We usually assume that the bits $B^k$ are uniformly distributed over $\{0,1\}^k$.

The set $\cS$ of $a^n$ generated by a \gls{dm} is called the codebook, and its elements are called codewords. For uniformly distributed $B^k$, each codeword occurs with probability $2^{-k}$.

\section{Optimal Distribution Matchers}
\label{sec:properties}
\subsection{Minimizing Divergence}
Optimal \gls{dm}s for iid $Q_{A^n}=(Q_A)^n$ were studied in \cite{Boecherer-IT16,schulte2017divergence,kramer2021divergence,schulte2022invertible}. The following result extends \cite[Proposition 3]{schulte2022invertible}
to target distributions with memory.

\begin{theorem}\label{thm:divmin}
The DM code $\cS$ that minimizes $\bbD (P_{A^n} || Q_{A^n})$ has all sequences $a^n$ which satisfy 
\begin{equation}
    Q_{A^n}(a^n) \geq 2^{-nI}
\end{equation}
for some $I$. Equivalently, $\cS$ has all sequences satisfying
\begin{equation}
   \frac{1}{n} i_{Q_{A^n}} (a^n) \leq I.
\end{equation}
\end{theorem}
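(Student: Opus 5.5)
Since the input bits are uniform and the map is invertible, every codeword in $\cS$ has probability $2^{-k}$, so $|\cS| = 2^k$. The divergence then has a closed form:
\begin{equation}
\bbD(P_{A^n}||Q_{A^n}) = \sum_{a^n\in\cS} 2^{-k}\log_2\frac{2^{-k}}{Q_{A^n}(a^n)} = -k + \frac{1}{2^k}\sum_{a^n\in\cS} i_{Q_{A^n}}(a^n).
\end{equation}
For fixed $k$ and $n$, minimizing the divergence over codebooks of size $2^k$ is therefore the same as minimizing the sum of the self-informations $i_{Q_{A^n}}(a^n)$ over the chosen sequences. This is a combinatorial selection problem, and the memory in $Q_{A^n}$ plays no role. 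The function $i_{Q_{A^n}}$ is just a fixed real-valued weight on $\cA^n$, so the iid argument of \cite[Proposition 3]{schulte2022invertible} carries over unchanged.

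I would finish with an exchange argument. Suppose $\cS$ is optimal but there exist $a^n\in\cS$ and $b^n\notin\cS$ with $Q_{A^n}(b^n) > Q_{A^n}(a^n)$. Swapping $a^n$ for $b^n$ keeps $|\cS|=2^k$ and keeps the map invertible, since we only relabel one codeword. It strictly decreases the sum, which contradicts optimality. So every optimal $\cS$ consists of $2^k$ sequences of largest $Q_{A^n}$-probability. Set $2^{-nI} = \min_{a^n\in\cS} Q_{A^n}(a^n)$. Then every codeword satisfies $Q_{A^n}(a^n)\ge 2^{-nI}$, and every sequence with strictly larger probability lies in $\cS$. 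Taking $-\frac{1}{n}\log_2$ gives the equivalent form $\frac1n i_{Q_{A^n}}(a^n)\le I$.

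The one point that needs care is ties at the threshold. The set $\{a^n : Q_{A^n}(a^n)\ge 2^{-nI}\}$ may contain more than $2^k$ sequences. In that case $\cS$ contains all sequences strictly above the threshold plus an arbitrary subset of those exactly at it. I would read the theorem as saying that all codewords satisfy the inequality and $\cS$ is a superlevel set up to tie-breaking. I would also note that a codeword with $Q_{A^n}(a^n)=0$ makes the divergence infinite. So the optimum uses only sequences in the support, which requires $2^k\le|\mathrm{supp}(Q_{A^n})|$, and this keeps $I$ finite.
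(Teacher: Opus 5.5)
Your exchange argument is correct, but it assumes a fixed codebook size $2^k$, and under that assumption it proves less than the theorem claims. With $|\cS|$ fixed you get ``any $2^k$ most probable sequences, ties broken arbitrarily.'' You then reread the theorem as ``every codeword satisfies $Q_{A^n}(a^n)\ge 2^{-nI}$.'' The theorem asserts more: $\cS$ \emph{contains} every sequence meeting the threshold, so it is exactly the superlevel set $\{a^n : \frac1n i_{Q_{A^n}}(a^n)\le I\}$. The remark after the proof makes this explicit: sequences of equal probability are either all in $\cS$ or none are, which gives type classes in the binary iid case. With $|\cS|$ pinned to $2^k$ this is generally false, as you observed. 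So the tie issue you set aside is precisely the content of the paper's proof.

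The missing idea is to let the codebook size vary. The paper takes $\cS'$ to be the full superlevel set at level $\hat I$ and adds $\ell$ of the $\ell_{\max}$ sequences at the next level $I>\hat I$. With $\bar I$ the average of $\frac1n i_{Q_{A^n}}$ over $\cS'$, it writes
\begin{equation*}
\bbD(P_{A^n}\|Q_{A^n}) = -\log_2\bigl(|\cS'|+\ell\bigr) + nI - \frac{n|\cS'|\,\Delta I}{|\cS'|+\ell}, \qquad \Delta I = I-\bar I>0.
\end{equation*}
The derivative in $\ell$ is positive for $\ell<\ell_0$ and negative for $\ell>\ell_0$, where $\ell_0=|\cS'|(n\Delta I\ln 2-1)$. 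So the only stationary point is a maximizer, and the minimum over $\ell\in\{0,\dots,\ell_{\max}\}$ is at $\ell=0$ or $\ell=\ell_{\max}$: partially filling a tie level is never optimal.

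Your exchange argument is not wasted. It supplies a step the paper leaves implicit: for any given size, the optimal code consists of the most probable sequences, which justifies restricting attention to codes of the form $\cS'\cup\cS''$. But without the endpoint argument over $\ell$ you have not established the all-or-none structure that the theorem asserts.
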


\begin{proof}
The proof follows the steps of the iid case, except that it uses self-information rather than cross-entropy.

Consider $\cS= \cS' \cup \cS''$, where $\cS'$ contains all sequences with $\frac{1}{n} i_{Q_{A^n}} (a^n) \leq \hat{I}$ and $\cS''$ contains exactly $\ell$ sequences with $\frac{1}{n} i_{Q_{A^n}} (a^n) = I$, with $\hat{I}< I$. We have
\begin{align}
    &\bbD (P_{A^n}|| Q_{A^n}) =  \sum_{a^n \in \mathcal{S}} \frac{1}{|\mathcal{S}|}\log_2 \frac{1/|\mathcal{S}|}{Q(a^n)} \nonumber \\
    &= - \log_2 (|\cS'|+ \ell) + \frac{1}{|\cS'|+\ell} \sum_{a^n \in \cS'} -  \log_2 Q(a^n)  \nonumber \\ 
    &\qquad \qquad \qquad + \frac{1}{|\cS'|+\ell} \sum_{a^n \in \cS''} - \log_2 Q(a^n) \nonumber \\
    &  = - \log_2 (|\cS'|+ \ell) +  \frac{n |\cS'| \bar{I}}{|\cS'|+\ell}  + \frac{n \ell I }{|\cS'|+\ell} 
\end{align}
with $\bar{I}= \frac{1}{n |\cS'|}  \sum_{a^n \in \cS'} i_{Q_{A^n}}(a^n)$. Taking derivatives with respect to $\ell$ (considering $\ell$ as continuous) yields:
\begin{align}
    \frac{\partial }{\partial \ell}\bbD (P_{A^n}|| Q_{A^n}) &= \frac{-1}{\ln 2 (|\cS'|+ \ell)} + \frac{n|\cS'|}{(|\cS'|+ \ell)^2} \Delta I \\
    \frac{\partial^2 }{\partial \ell^2}\bbD (P_{A^n}|| Q_{A^n}) &= \frac{1}{\ln 2 (|\cS'|+ \ell)^2} - \frac{2 n|\cS'|}{(|\cS'|+ \ell)^3} \Delta I
\end{align}
where $\Delta I = I - \bar{I}>0$. The first derivative is zero only at
\begin{equation}
    \ell_0 = |\cS'|(n \Delta I(\ln 2)  - 1)
\end{equation}
and we have $\ell_0>-|\cS'|$. The second derivative is negative at $\ell_0$, so $\ell_0$ is a maximizer. Let $\hat{\ell} \in \{0, \dots , \ell_{max}\}$ be the integer that minimizes $\bbD (P_{A^n}|| Q_{A^n})$, where $\ell_{max}$ is the number of $a^n$ with self-information $I$. There are three possible cases:
\begin{align}
    \ell_0 \in [0, \ell_{max}], \quad
    \ell_0 <0, \quad
    \ell_0 > \ell_{max}.
\end{align}
In all cases, we have $\hat{\ell} = 0$ or $\hat{\ell} = \ell_{max}$. 
\end{proof}

An optimal \gls{dm} code $\cS$ thus has all sequences with high probability. Moreover, for sequences with the same probability, they are either all in the set or none of them are in the set. For binary iid distributions, this corresponds to type sets.

\subsection{Divergence Scaling}
The divergence of optimal \gls{dm}s for memoryless sources grows as $\frac{1}{2}\log_2 n$ with $n$ \cite{schulte2017divergence,kramer2021divergence,schulte2022invertible}. We wish to extend this result to sources with memory. Consider innovation-driven processes, i.e., processes generated by mappings
\begin{equation}
    A^n = f(Z^n)
\end{equation}
where $Z^n$ is a sequence of iid samples. Such processes can be represented using a framework called Bernoulli schemes \cite[Chapter 7]{doukhan2018stochastic}, which includes autoregressive and Markov processes. For example, it includes the distribution of the process $A_i = A_{i-1} \oplus Z_i$, where $Z_i$ is iid Bernoulli and $A_0$ is fixed and known. We have the following result.

\begin{theorem} \label{thm:scaling}
Consider a target probability $Q_{A^n}$ with $A^n = f(Z^n)$ where $f(\cdot)$ is invertible and $Z^n$ is drawn from an iid distribution. The divergence $D(P_{A^n}||Q_{A^n})$ of the optimal DM code $\cS$ grows as 
$\frac{1}{2}\log_2 n$ with n.
\end{theorem}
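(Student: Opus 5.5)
The approach is to reduce to the memoryless case through the invertible map $f$. Since $f$ is invertible and $A^n=f(Z^n)$, we have $Q_{A^n}(a^n)=Q_{Z^n}(f^{-1}(a^n))=\prod_{i=1}^n Q_Z(z_i)$ with $z^n=f^{-1}(a^n)$. For any codebook $\cS\subseteq\cA^n$, let $\cS_Z=f^{-1}(\cS)$; this is again a codebook of the same size. A uniform choice over $\cS$ corresponds to a uniform choice over $\cS_Z$. Because divergence is invariant under bijections applied to both arguments, we obtain
\begin{equation}
\bbD(P_{A^n}\|Q_{A^n})=\bbD(P_{Z^n}\|Q_{Z^n}).
\end{equation}
The map $\cS\mapsto f^{-1}(\cS)$ is therefore a divergence-preserving bijection between DM codes for $Q_{A^n}$ and DM codes for the iid target $(Q_Z)^n$. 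The minimum divergence over codes of a given rate (or size) is thus identical for the two problems. In particular, Theorem~\ref{thm:divmin} gives the same structure on both sides: the optimal $\cS$ is the preimage under $f^{-1}$ of the optimal iid code.

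Next I invoke the known iid result \cite{schulte2017divergence,kramer2021divergence,schulte2022invertible}: for iid targets, the optimal DM divergence grows as $\frac12\log_2 n$. This transfers directly to $A^n$. For completeness, I would recall why this holds. The optimal iid code is a union of type classes with the highest $Q_Z$-probability, namely those with the smallest cross entropy $\bbX(\hat P\|Q_Z)$. Its divergence equals $\bbE[i_Q(Z^n)]-\log_2|\cS|$. Sizes of type classes are, by Stirling's formula, $2^{nH(\hat P)}/\mathrm{poly}(n)$, and the boundary type class contributes the $\frac12\log_2 n$ term.

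Some care is needed with the alphabets, and I expect this to be the main obstacle. I must check that $f$ acts between sequence spaces of the same length, is a bijection onto the support, and that the codewords of $\cS$ have nonzero target probability, so that $Q_{A^n}(\cS)$ has no issues outside the image of $f$. If $f$ is only injective, I restrict $\cA^n$ to the image of $f$. Any codeword outside the image would have $Q=0$ and infinite divergence, so it is never chosen. A second issue is the precise meaning of ``grows as $\frac12\log_2 n$''. I would state it as inheriting exactly the same asymptotic form (with the same rate regime, e.g.\ fixed rate $R<H(Z)$) as the cited iid theorem, since the two minimization problems coincide for every $n$.
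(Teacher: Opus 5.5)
Your proposal is correct and follows the paper's route. The paper's proof is just the data-processing inequality $\bbD(P_{A^n}\|Q_{A^n})\le\bbD(P_{Z^n}\|Q_{Z^n})$, with equality for invertible $f$ (your bijection-invariance of divergence), followed by the iid $\frac{1}{2}\log_2 n$ result; your remark that $\cS\mapsto f^{-1}(\cS)$ is a divergence-preserving bijection of codebooks spells out why the two minimization problems coincide, which the paper only notes after the proof.

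One correction to your aside: at a fixed rate $R<\mathbb{H}(Z)$ the optimal divergence grows linearly in $n$, since $\frac{1}{n}\bbD \ge \min_{P:\,\mathbb{H}(P)\ge R}\bbX(P\|Q_Z)-R>0$. The regime you inherit is therefore that of the cited iid results, with $k$ chosen (e.g., optimized) so that $k/n$ approaches the entropy, not a fixed rate below entropy.
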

\begin{proof}
 The proof follows from the data processing inequality:
 \begin{align}
    \bbD(P_{A^n}|| Q_{A^n}) \le \bbD(P_{Z^n}|| Q_{Z^n})
\end{align}
with equality if $f(\cdot)$ is invertible.
\end{proof}

Note that the codewords of the optimal codebook for $Q_{A^n}$, which has the most likely $a^n$ wrt $Q_{A^n}$, can be computed by applying $f(.)$ to the most likely $z^n$ wrt $Q_{Z^n}$.

\section{Sphere Shaping}
\label{sec:sphere-shaping}
Sphere shaping realizes a desired target distribution by using sequences that lie on or within an energy sphere. More precisely, sphere shaping accepts only those sequences with energy less than or equal to a specified $E_\text{max}$. 
The induced distribution converges to a discrete Gaussian with the same mean and variance as the sampled distribution.

Consider the alphabet $\cA$. The set of valid sequences is
\begin{equation}
    \cA ^\bullet = \left\{ a_1, \dots , a_n \Big| \sum\nolimits_{i=1}^n a_n ^2 \le E_\text{max}
    \right\} \subseteq \cA ^n .
\end{equation}
For $k$ bits, the maximum energy $E_{\text{max}}$ is chosen as the smallest value such that $|\cA^\bullet| \geq 2^k$.
The codebook $\cS \subseteq \cA ^\bullet $ has size $|\cS| = 2^k$.

We later focus on the \gls{ask} alphabet $\cA=\{ 1, 3, \dots, \sqrt{E_\text{max}}\}$, for which the possible energy levels are $\{n, n+8, \dots, E_{\text{max}} \}$. The maximum energy is
\begin{equation}
    E_\text{max} = n + 8 (L_s-1),
\end{equation}
where $L_s$ is the number of energy shells.
From a geometric perspective, all sequences in $\cA^\bullet$ lie inside or on the surface of a sphere with radius $\sqrt{E_\text{max}}$.
\gls{ccdm} instead has all codewords with the same energy and lying on the surface of a sphere; see Fig.~\ref{fig:spheres}. For large $n$, sphere shaping and \gls{ccdm} become the same in terms of sequence selection and synthesized probability. The reason is that the sphere volume concentrates near the surface for large $n$, a phenomenon known as sphere-hardening.

\begin{figure}
    \centering
    \vspace*{3pt}
    \includegraphics{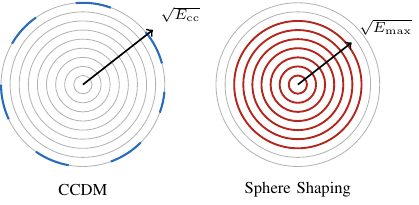}~ 
    \caption{\gls{ccdm} vs. sphere shaping.}
    \label{fig:spheres}
\end{figure}

\subsection{Enumerative Sphere Shaping} \label{subsec:ess}
\gls{ess} indexes the sequences inside a sphere with a specified radius $\sqrt{E_\text{max}}$ \cite{willems1993pragmatic}. The method requires that the $a^n$ in $\mathcal{A}^{\bullet}$ admit a lexicographical ordering, and associates each $a^n$ with an index $i(a^n)$ that specifies the number of sequences that precede it in the ordering. The algorithm constructs a trellis with $n+1$ stages, where each branch corresponds to an amplitude in $\mathcal{A}$. Each trellis path represents a sequence $a^n$ in $\mathcal{A}^{\bullet}$.

Consider a node at stage $s$ with energy $e$, and
let $T_s^e$ be the number of paths that originate from this node. For example, $T_0^0$ is the number of paths starting from the first node (stage $0$, energy $0$), and it counts all the sequences in the trellis. By initializing the last stages with $T_n^e =1$, one may recursively calculate
\begin{equation}
    T_s^e = \sum\nolimits_{a\in\cA} T_{s+1}^{e+a^2}.
    \label{eq:trellis_rec}
\end{equation}
The index of a sequence can be obtained by counting the number of paths that are directed to  (lexicographically) smaller symbols via Cover's formula \cite{cover1973enumerative}: 
\begin{equation}
    i(a^n) = \sum_{s=1}^n \: \sum_{b\in \cA, \, b<a_s} T_s^{b^2 + \sum_{j=1}^{s-1}a_j^2} .
    \label{eq:cover}
\end{equation}

\textbf{Shaping:} The shaping input is a sequence of $k$ bits that is interpreted as an index $i$, $ 0 \leq i < T_0^0$. The first index is initialized as $i_1 = i$. Next, for $s=1, \dots, n$, we select $a_s$ s.t. 
\begin{equation}
   \sum\nolimits_{b<a_s} T_s^{b^2 + \sum_{j=1}^{s-1}a_j^2} \ge i_s < \sum\nolimits_{b\leq a_s} T_s^{b^2 + \sum_{j=1}^{s-1}a_j^2}
\end{equation}
and the index is updated:
\begin{equation}
    i_{s+1} = i_s -  \sum\nolimits_{b<a_s} T_s^{b^2 + \sum_{j=1}^{s-1}a_j^2}.
\end{equation}
The output is the sequence $a_1, a_2, \dots, a_n$, the rate is $R=k/n$ bits/symbol, and the codebook is $\cS_{\text{ess}}$ where $\cS_{\text{ess}} \subseteq \cA^\bullet$.

\textbf{Deshaping:}
The deshaping input is the sequence $a_1, a_2, \dots, a_n$. Initialize $i_{n+1} = 0$ and recursively compute
\begin{equation}
    i_s = \sum\nolimits_{b<a_s} T_s^{b^2 + \sum_{j=1}^{s-1}a_j^2} + i_{s+1}.
\end{equation}
for $s=n, n-1, \dots ,1$. The output index is $i=i_1$.

\subsection{Storage and Computational Complexity} \label{subsec:complexity}
ESS shaping and deshaping require storing the entire trellis, giving storage of $\cO(n^3)$ bits \cite{gultekin2019enumerative}. The computational complexity is on the order of $n$ bit operations per output symbol, growing linearly with $n$ \cite{gultekin2019enumerative}. One can reduce the storage to $\cO(n^2)$ bits with bounded precision\cite{gultekin2018approximate}, and the computational complexity to a constant that depends on the number of bits for the mantissa. One may further reduce complexity by using suboptimal methods, such as quantized \gls{ess} \cite{Savov-Runge-ISIT25}.

\section{Distribution Matchers for Parallel Channels}
\label{sec:parallel-channels}
\subsection{Channel Model and Power Allocation}
Consider $L$ parallel \gls{awgn} channels. The channel input-output relation is
\begin{equation}
  Y_\ell = h_\ell X_\ell + Z_\ell, \quad \ell = 1,2,\dots,L
  \label{eq:channel}
\end{equation}
where the coefficients $h_\ell$ are known at the transmitter and receiver, and the $Z_\ell$ are independent zero-mean Gaussian with unit
variance. We use the channels $n$ times, i.e., there are $Ln$ channel uses in total, and impose the block power constraint
\begin{equation}
  \frac{1}{Ln} \sum\nolimits_{\ell=1}^{L} \sum\nolimits_{i=1}^{n}\mathbb{E}[X_{\ell,i}^2] \le P
  \label{eq:sum_power}
\end{equation}
where $X_{\ell,i}$ is the $i$th symbol of channel $\ell$. The average signal-to-noise ratio over the channels is thus $P$.

The capacity $\frac{1}{L}\sum_\ell \frac{1}{2}\log_2(1+h_\ell^2 P_\ell)$ is achieved by Gaussian signaling and waterfilling power allocation with $\lambda_{\mathrm{wf}}$ chosen so that
\begin{equation}
  P_\ell^\star = \left[\frac{1}{\lambda_{\mathrm{wf}}} - \frac{1}{h_\ell^2}\right]^{+},
  \quad
  \frac{1}{L}\sum\nolimits_{\ell=1}^{L} P_\ell^\star = P .
  \label{eq:waterfilling}
\end{equation}
The rate for $P_\ell^\star>0$ is thus $C_\ell = \tfrac{1}{2}\log_2\!\left(h_\ell^2/\lambda_{\mathrm{wf}}\right)$. The constellation sizes are usually chosen as $2^{m_\ell}$, where
\begin{equation}
  m_\ell = \left\lfloor C_\ell \right\rfloor + 1.
\end{equation}
The additional bit keeps the spectral efficiency loss small  \cite{steiner2018approaching}. We use $2^{m_\ell}$-\gls{ask} with
\begin{equation}
  \mathcal{A}_\ell = \left\{1,3,\dots,2^{m_\ell}-1\right\}.
\end{equation}
The channel input is $X_\ell = S_\ell\Delta_\ell A_\ell$, where $\Delta_\ell>0$ is a
scaling, $S_\ell$ is uniform on $\{\pm1\}$, and $S_\ell$ is independent of
$A_\ell$. 

\subsection{Target Distributions}
We use discrete Gaussian distributions \cite{kschischang1993optimal}
\begin{equation}
	Q_{X_\ell}(x) \;=\;
	\frac{\exp\!\left(-\nu_\ell\,x^2\right)}
	{\sum_{x'\in\cX_\ell}\exp\!\left(-\nu_\ell\,x'^2\right)},
	\quad \nu_\ell > 0
	\label{eq:mb-transmit}
\end{equation}
which induce the amplitude distributions
\begin{equation}
  Q_{A_\ell}(a) \propto \exp\!\left(-\nu_\ell\Delta_\ell^2 a^2\right),
  \quad a\in\mathcal{A}_\ell .
  \label{eq:QA}
\end{equation}
The inputs are independent:
\begin{equation}
	Q_{A^{Ln}}\!\left(a^{Ln}\right)
	= \prod\nolimits_{\ell=1}^{L} \prod\nolimits_{i=1}^{n} Q_{A_\ell}\!\left(a_{\ell,i}\right)
	\label{eq:target}
\end{equation}
\enlargethispage{-4pt}%
and Theorem~\ref{thm:divmin} selects all $a^{Ln}$ with
\begin{equation}
  \sum\nolimits_{\ell=1}^{L} w_\ell\,
  \sum\nolimits_{i=1}^{n} a_{\ell,i}^2 
  \le E_{\max}
  \label{eq:weighted}
\end{equation}
where the energy $E_{\max}$ is chosen to give a desired rate, and the weight $w_\ell:=\nu_\ell\Delta_\ell^2$ modifies the powers $a_{\ell,i}^2$. 

It remains to select the $w_\ell$, for which we can choose $\Delta_\ell$ and $ \nu_\ell$ for convenient implementation. We choose $\Delta_\ell=\Delta/h_\ell$ to give the received constellations a common spacing relative to the noise across all channels \cite{bocherer2017high, steiner2018approaching}. The transmit power is then proportional to $\sum_\ell \mathbb{E}[A_\ell^2]/h_\ell^2$, which one minimizes at a target rate $R_{\mathrm{dm}}$ \cite[eqs.~(43)--(44)]{bocherer2017high}:
\begin{equation}
  \min_{P_{A_1},\dots,P_{A_L}} \sum\nolimits_{\ell=1}^{L}\frac{\mathbb{E}[A_\ell^2]}{h_\ell^2}
  \quad \text{s.t.} \quad
  \frac{1}{L}\sum\nolimits_{\ell=1}^{L}\mathbb{H}(A_\ell) = R_{\mathrm{dm}} .
  \label{eq:minpower}
\end{equation}
The Lagrangian derivative separates over $\ell$ and $a\in\mathcal{A}_\ell$, and one obtains
\begin{equation}
  P_{A_\ell}(a) \propto \exp\!\left(-\mu\,\frac{a^2}{h_\ell^2}\right)
  \label{eq:MBsol}
\end{equation}
implying that $w_\ell \propto 1/h_\ell^2$ with $\nu_\ell=\nu$ for all $\ell$.

We remark that \eqref{eq:MBsol} seems different than the waterfilling solution \eqref{eq:waterfilling}. However, the two approaches are similar for the high-power scenario studied below.

\subsection{Joint Enumerative Sphere Shaping}
\label{subsec:jess}
Consider the target distribution \eqref{eq:target} where $Q_{A_\ell}$ is given by \eqref{eq:MBsol}. We apply joint ESS (J-ESS) across the channels and assign to each channel a rounded weight
\begin{equation}
    w_\ell~=~\left\lfloor \frac{1/h_\ell^2}{\min_j 1/h_j^2} \right\rceil
\end{equation}
so the trellis has integer branch costs.
We build a trellis with $L$ stages for each time index $i$, for a total of $L n+1$ stages. Stage $s$ emits one amplitude for channel $\ell(s)$, the channels being visited round-robin, and a branch carrying amplitude $a\in\mathcal{A}_{\ell(s)}$ costs $w_{\ell(s)}\,a^2$. The node counts satisfy the same backward recursion as \eqref{eq:trellis_rec}:
\begin{equation}
  T_s^{\,e} = \sum\nolimits_{a\in\mathcal{A}_{\ell(s)}} T_{s+1}^{\,e + w_{\ell(s)} a^2}
  \label{eq:trellis_rec_joint}
\end{equation}
with $T_{Ln}^{\,e} = 1(e \le E_{\max})$; indexing, shaping and deshaping remain the same.

Fig.~\ref{fig:joint_ess_trellis} shows an example joint trellis for $L=n=2$. Channel~1 has alphabet $\mathcal{A}_1=\{1,3,5\}$ and weight $w_1=1$, channel~2 has $\mathcal{A}_2=\{1,3\}$ and weight $w_2=4$, and $E_{\max}=30$. Branch costs are $\{1,9,25\}$ on channel~1 and $\{4,36\}$ on channel~2, and the trellis admits four valid amplitude sequences. The storage and complexity are given in Sec.~\ref{subsec:complexity} with $n$ replaced by $Ln$.

\begin{figure}[t]
    \centering
    \includegraphics{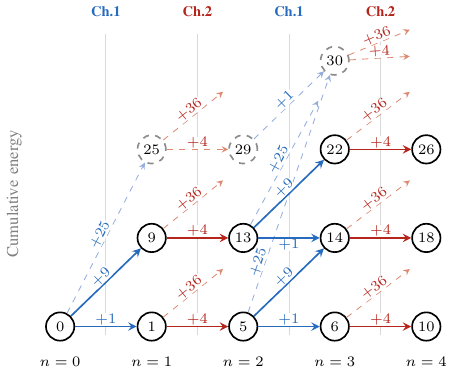}~ 
    \caption{J-ESS trellis for the considered example. The number inside the nodes represents the energy level.}
    \label{fig:joint_ess_trellis}
\end{figure}

\section{Simulation Results}
\label{sec:simulations}
We illustrate the performance for $L = 3$ channels with coefficients $h_\ell = 2,\,1,\,0.5$ and $w_\ell = 1,\,4,\,16$. Let $m_\ell = 5,\,4,\,3$ so the channels carry $32$-, $16$- and $8$-\gls{ask}. 

\subsection{Rate Loss}
We measure the rate loss in bits per symbol as
\begin{equation}
  R_{\mathrm{loss}}
  = \frac{1}{L}\sum\nolimits_{\ell=1}^{L}\mathbb{H}\!\left(P_{A_\ell}\right)
    - \frac{k}{Ln}.
  \label{eq:rateloss_parallel}
\end{equation}
Fig.~\ref{fig:rateloss} compares $R_{\mathrm{loss}}$ for \glspl{pdm}, per-channel \gls{ess} with $L$ independent sphere shapers, and J-\gls{ess}. All shapers have $k$ bits of input. As expected, J-\gls{ess} exhibits the smallest rate loss, and
the gains are larger at short blocklengths. At $n = 10$, the shapers lose $0.400$, $0.118$, and $0.063$ bits per use, respectively; at $n=50$, they lose $0.107$, $0.044$ and $0.019$ bits per use. J-\gls{ess} thus reduces the rate loss by a factor of 5--6 compared to PDM, and a factor of 2--3 compared to per-channel ESS.

\begin{figure}[t]
    \centering
    \includegraphics{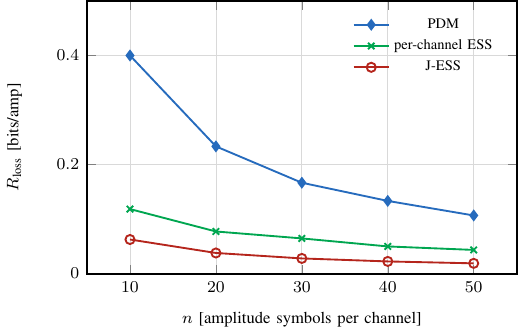}
     \vspace{-0.5cm}
    \caption{Rate loss for parallel AWGN channels.}
    \label{fig:rateloss}
\end{figure}

The gains arise for two reasons. First, joint shaping indexes a single codebook over all $Ln$ amplitudes instead of $L$ codebooks of length $n$, which amortizes the finite-length penalty in the same way PDM's shared component DMs do. Second, unlike PDM, J-\gls{ess} is not restricted to a product distribution over bit levels: by Theorem~\ref{thm:divmin}, it is the divergence-minimizing code for the target \eqref{eq:target} at that rate.

\subsection{Simulations with 5G-NR LDPC Codes}
We use \gls{pas} \cite{bocherer2015bandwidth} and the 5G-NR low-density parity-check (LDPC) code with base graph~1. The systematic block has $\gamma L n$ sign-data bits followed by $\sum_\ell (m_\ell - 1)n$ amplitude bits, where $\gamma\in[0,1]$. The signs are ordered so that 5G-NR's mandatory puncturing of the leading $2Z$ systematic bits removes uniform sign bits rather than shaped amplitude bits. 
With rate 5/6, this gives $\gamma = 1/3$ and overall blocklength $n_{\mathrm{code}} = 12n$. 
Amplitudes are generated with the \gls{nbc} labeling and are relabelled with the \gls{brgc} before encoding. We use belief-propagation decoding with up to $100$ iterations, and every \gls{fer} point in Fig.~\ref{fig:fer} is simulated with at least $100$ frame errors.

\begin{figure} [t]
    \centering
    \includegraphics{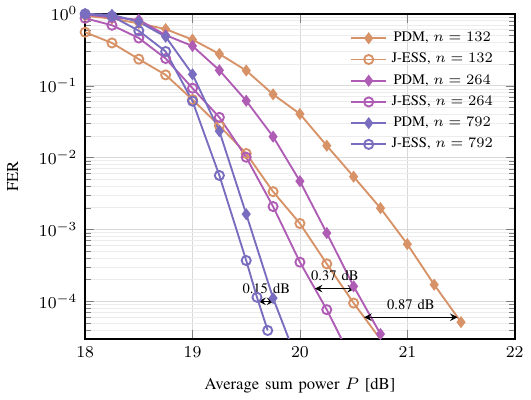}
    \vspace{-0.5cm}
    \caption{\Gls{fer} comparisons of PDM and J-ESS.}
    \label{fig:fer}
\end{figure}

We study three blocklengths: $n = 44,\,88,\,264$ (or $Ln = 132,\,264,\,792$) for each channel with lifting sizes $Z = 20,\,40,\,120$. For each $n$, the number $k$ of information bits is set to the largest input length that \gls{pdm} can support.
\gls{pdm} uses one binary CCDM per amplitude bit level. 
CCDM$_{i}$ has binary output length $n_i = n |\{\ell: m_\ell \geq m-1\}|$, with $i=1 \dots m-1$ and target Bernoulli parameter $p_i$. This results in sequences with a composition of  $c_i \approx  n_i p_i$ ones, so
\begin{equation}
     k = \sum\nolimits_{i=1}^{m-1}\left\lfloor \log_2 \binom{n_i}{c_i} \right\rfloor 
\end{equation}
giving $k = 334,687,2091$ and the average rates
$R_\text{avg}=2.864,2.936,2.974$~bpcu over all $Ln$ uses (so multiply by three to obtain the sum rate across the three channels).

J-\gls{ess} has the same $k$ at the same $n$, so the two schemes are rate-matched for each blocklength. Across blocklengths, they are not matched: the \gls{ccdm} rate loss is nearly constant in bits while $Ln$ shrinks sixfold, so $R_\text{avg}$ drifts by $0.11$~bpcu and curves at different $n$ should not be compared.

J-\gls{ess} shapes in blocks of $3 \times 44 = 132$ amplitudes for every $n$, so the shaping blocklength is fixed while the coding blocklength varies. Since $k$ does not divide evenly across blocks, each configuration uses two trellises with input lengths $k$ and $k+1$. Fig.~\ref{fig:fer} shows the \glspl{fer}.

\subsection{Complexity}
J-\gls{ess} has higher complexity than the other schemes because it jointly shapes all channels using a single trellis over the $N = 3\times 44 = 132$ amplitudes of a block. 
In our setup, this trellis has about $2.7\times 10^{6}$ nodes, each storing
an integer of up to $k \approx 350$ bits. 
In general, the storage grows as $\mathcal{O}(N^3)$ bits \cite{gultekin2019enumerative}, or $\mathcal{O}(N^2)$ bits with
bounded precision \cite{gultekin2018approximate}, and shaping proceeds sequentially over
the $N$ amplitudes. 
PDM instead runs $m-1 = 4$ binary CCDMs in parallel, with small storage and a linear complexity in the output length
\cite{bocherer2017high, steiner2018approaching}. 
Nevertheless, J-\gls{ess} may be attractive for short blocks where the rate loss dominates; see Fig.~\ref{fig:fer}.

\section{Conclusions}
\label{sec:conclusions}
We extended theory for \glspl{dm} to distributions with memory. Theorem~\ref{thm:divmin} characterizes the divergence-minimizing DM code for any $Q_{A^n}$ by collecting the sequences with the largest target probability. Theorem~\ref{thm:scaling} extends the $\frac{1}{2}\log_2 n$ divergence scaling to innovation-driven processes.

Theorem~\ref{thm:divmin} gives a weighted energy constraint for parallel \gls{awgn} channels. We used this insight to extend \gls{ess} by using multiple stages at each time index and making the branch-metric stage-dependent. The resulting shaper can be used with PAS. J-\gls{ess} reduces the rate loss compared to \gls{pdm} by a large factor for an illustrative channel and short blocklengths. The gains carry over to the coded performance with PAS and a 5G-NR LDPC code. These gains come at the cost of a higher trellis complexity than PDM. Future work may reduce complexity, e.g., by using quantized \gls{ess} \cite{Savov-Runge-ISIT25}.

\section*{Acknowledgments}
This work was supported by the Deutsche Forschungsgemeinschaft (DFG) through research grant {KR 3517/13-1}.

\bibliographystyle{IEEEtran}
\bibliography{references}

\vfill

\end{document}